\documentclass[10pt, conference,onecolumn]{IEEEtran}
\IEEEoverridecommandlockouts
\usepackage{setspace}
\setstretch{1.128}
\usepackage{cite}
\usepackage{amsmath,amssymb,amsfonts}
\usepackage{graphicx}
\usepackage{caption,subcaption}
\usepackage{dblfloatfix}
\usepackage[table,xcdraw]{xcolor}
\usepackage{booktabs}
\usepackage{textcomp}
\usepackage{soul}
\usepackage[long]{optidef}
\usepackage{todonotes}
\usepackage{ifthen}
\usepackage{listings}
\usepackage{diagbox}
\usepackage{slashbox}
\usepackage{adjustbox}
\usepackage{comment}
\usepackage{amsthm}

\newtheorem{lemma}{Lemma}

\usepackage{balance}

\usepackage{varwidth}% http://ctan.org/pkg/varwidth
\usepackage{algorithmic}

\definecolor{yellow}{rgb}{1.0, 0.98, 0.596}
\newboolean{ncsversion}
\setboolean{ncsversion}{true}

% % == macros for ncs-header
% \ifthenelse{\boolean{ncsversion}}{
%     \usepackage{eso-pic}% http://ctan.org/pkg/eso-pic
%     \AddToShipoutPictureBG{% Add picture to background of every page
%         \AtPageUpperLeft{%
%             \raisebox{-3\baselineskip}{
% 				\makebox[\paperwidth]{\colorbox{yellow}{\begin{minipage}{0.85\paperwidth}\centering\small
%                      {\footnotesize \copyright AAA
%                      }
%                         \end{minipage}}}}%
%                     }
%                     \AtPageLowerLeft{%
%                         \raisebox{3\baselineskip}{
% \makebox[\paperwidth]{\fbox{\begin{minipage}{0.85\paperwidth}\scriptsize
%                                         {The documents distributed by this server have been provided by the contributing authors as a means to ensure timely dissemination of scholarly and technical work on a non-commercial basis. Copyright and all rights therein are maintained by the authors or by other copyright holders, not withstanding that they have offered their works here electronically. It is understood that all persons copying this information will adhere to the terms and constraints invoked by each author's copyright. These works may not be reposted without the explicit permission of the copyright holder.
%                                         }
%                                     \end{minipage}}}}%
%                     }
%     }
% }{}

%\usepackage[hidelinks]{hyperref}
\usepackage{orcidlink}
\hypersetup{%
    pdfborder = {0 0 0}
}
\usepackage{xcolor}
\usepackage[ruled,vlined]{algorithm2e}

%\newcommand{\notear}[2][]{\note[red]{[<Amr> #2]}}
%\newcommand{\notersh}[2][]{\note[yellow]{[<Rhaban> #2]}}

%\setstretch{1.6667}
\setlength{\columnsep}{0.3 in}
\def\BibTeX{{\rm B\kern-.05em{\sc i\kern-.025em b}\kern-.08em
    T\kern-.1667em\lower.7ex\hbox{E}\kern-.125emX}}
%\setmainfont{Times New Roman}
\begin{document}

%\IEEEpubid{\makebox[\columnwidth]{\kern 1pt 978-1-6654-4579-5/21/\$31.00~\copyright 2021 IEEE \hfill}
%\hspace{\columnsep}\makebox[\columnwidth]{ }}

%\makeatletter
%\def\ps@IEEEtitlepagestyle{%
%\def\@oddfoot{\mycopyrightnotice}%
%\def\@evenfoot{}%
%}
%\def\mycopyrightnotice{%
%{\footnotesize 978-1-6654-4579-5/21/\$31.00~\copyright 2021 IEEE\hfill} % Revise this line accordingly!
%\gdef\mycopyrightnotice{}
%}

\title{Accelerating End-host Congestion Response using P4 Programmable Switches}

 \author{\IEEEauthorblockN{Nehal Baganal-Krishna\,\orcidlink{0000-0002-9099-1754}\IEEEauthorrefmark{1}\IEEEauthorrefmark{5}, Tuan-Dat Tran\IEEEauthorrefmark{5}, Ralf Kundel\IEEEauthorrefmark{3}, Amr Rizk\,\orcidlink{0000-0002-9385-7729}\IEEEauthorrefmark{5}}
  \IEEEauthorblockA{\IEEEauthorrefmark{1}University of Ulm, Germany}
 \IEEEauthorblockA{\IEEEauthorrefmark{5}University of Duisburg-Essen, Germany}

 \IEEEauthorblockA{\IEEEauthorrefmark{3} Techincal University of Darmstadt, Germany}
 }

\maketitle

\begin{abstract}
\noindent
\textbf{Transport layer congestion control relies on feedback signals that travel from the congested link to the receiver and back to the sender.
This \textit{forward congestion control loop}, first, requires at least one rount-trip time (RTT) to react to congestion and secondly, it depends on the downstream path after the bottleneck.
The former property leads to a reaction time in the order of RTT + bottleneck queue delay, while the second may amplify the unfairness due to heterogeneous RTT.}
\textbf{
In this paper, we present Reverse Path Congestion Marking (RPM) to accelerate the reaction to network congestion events without changing the end-host stack.
RPM decouples the congestion signal from the downstream path after the bottleneck while maintaining the stability of the congestion control loop.
We show that RPM improves throughput fairness for RTT-heterogeneous TCP flows as well as the flow completion time, especially for small Data Center TCP (DCTCP) flows.
Finally, we show RPM evaluation results in a testbed built around P4 programmable ASIC switches.}

% Our approach reduces the control loop delay by marking in-flight ACKs on their way to the sender and eliminates the dependency of the congestion control loop on the downstream bottleneck.
% Furthermore, we propose a reverse marking scheme that is TCP & DCTCP friendly and requires minimal data plane memory to store the congestion state of the AQM. Finally, we implement this approach on programmable switches.

\end{abstract}

\begin{IEEEkeywords}
\noindent Congestion Marking, AQM, P4, Data plane programming
\end{IEEEkeywords}

\section{Introduction}

Bufferbloat describes the excessive delays packets observe when traversing links with deep queues~\cite{gettys2011bufferbloat}.
Large buffer memory in combination with the available bandwidth and congestion control mechanisms of most of the modern (transport) protocols such as TCP (e.g. Cubic, BBR) and QUIC lead to large packet delays.
%filled bottleneck link buffers.
This is due to the available bandwidth estimation and congestion control mechanisms actively inducing congestion signals at the bottleneck to estimate the appropriate sending rate~\cite{huang2012confused}.
Active Queue Management (AQM) tackles this problem by keeping the buffer filling small.
This is done by sending congestion signals, i.e., through packet drops or ECN markings, when the delay due to buffering exceeds a target value.
Prominent AQM algorithms are Codel\cite{nichols2012CoDel}, PIE\cite{pan2013pie} and RED\cite{floyd1993red} and their variations.

Although AQM aims to manage the Bufferbloat problem, there still remain two fundamental problems that are due to (i) the end-to-end construction of the current congestion control mechanisms and (ii) traffic generation patterns of modern applications, especially in data centers.
Prevalent congestion control mechanisms are self-clocked, i.e., they work on the time scale of roundtrip time (RTT) with congestion signals created at bottleneck links having to bounce back from the receiver to the sender.
This binds the sender's reaction time to congestion to this time scale which is often used as a cornerstone to showing the stability of congestion control~\cite{srikant2014communication}.
However, it is also known that this RTT time scale binding of congestion control leads to throughput unfairness when multiple flows of different RTTs compete for the bandwidth at the same bottleneck~\cite{floyd1991connections,barakat2000tcp,altman2000fairness}.
Note that the end-to-end argument in congestion control also dictates that the excessive queueing time at the congested link buffer becomes part of the time required for the congestion signal to travel back to the sender.

The second problem arises especially in data centers where most traffic flows are short, e.g., due to application request-response patterns~\cite{alizadeh2010data, cho2017credit}.
These short flows, combined with the very high link capacities found in data center fabrics~\cite{uyeda2011efficiently}, lead to so-called microbursts.
These $\mu$bursts last shorter than a single RTT \cite{alizadeh2012less}, but most importantly, microbursts can lead to excessive packet drops as data center buffers are typically very small~\cite{ren2014survey}.
It is evident that classical congestion control mechanisms struggle with such phenomena leading to bloated flow completion times~\cite{zats2012detail}.
However, approaches that change the congestion control stack tackle this problem at the cost of special purpose end-host stacks~\cite{zhang2005xcp}.

In this paper, we propose an approach to reduce the congestion reaction time as well as the flow completion time for short flows using in-network support~\cite{Kundel2021ECN}.
We denote this approach Reverse Path Congestion Marking (RPM)\footnote{RPM stands on the shoulders of approaches for out of band congestion notification using special packets, e.g. ICMP or others \cite{feldmann2019, salim98}.}.
RPM modifies the in-flight ACKs for fast and stable congestion control in sub-RTT time scales.
RPM is compatible with TCP and DCTCP, hence no modifications to end-hosts are required.
Finally, we implement and show RPM in a programmable data-plane using P4 on modern networking hardware (Intel Tofino switch).

To summarize, our contributions are as follows:
\begin{itemize}
    \item We illustrate an approach for in-band reverse path congestion marking without additional control traffic.
    \item We provide an analytical model of RPM, proving its stability.
    \item We show an implementation of RPM in P4 on programmable networking hardware, including results from a testbed deployment.
\end{itemize}

\section{Challenges for forward-path congestion control loops}
\label{sec:challenges}
Congestion control closed-loop mechanisms comprise two major steps, i.e. (i) the congestion point signals congestion to the receiver using packet drops or CE code points in the IP header, and (ii) the sender reduces the congestion window conforming to the congestion notification received.
The above-mentioned control loop has two major challenges, i.e., congestion signal delay and path dependency, which we discuss in the following.

 \subsection{{Delayed congestion control response}}
The congestion control loop requires at least one RTT or even more time to ease the congestion at the bottleneck after congestion detection. To reach the sender, the congestion signal has to traverse from the bottleneck to the receiver and then from the receiver back to the sender. The propagation delay of the forward loop from the bottleneck to the sender is  $\tau_2 + \tau_3 + \tau_4$, as depicted in Fig.~\ref{fwd_ecn}. The congestion control loop additionally requires the time $\tau_1$ to propagate the effect of a reduced congestion window after the congestion signal arrives at the sender if this reduction is executed immediately. Hence, the total congestion signal propagation time elapsed from congestion occurrence in the bottleneck until the bottleneck experiences a reduction in the sending rate is at least $\sum_{i=1}^4 \tau_i$ (cf. Fig.~\ref{fwd_ecn}). The delay of the congestion control loop leads to a delayed dynamical system at the output port queue experiencing congestion.

 \subsection{{Downstream path dependent congestion response}}
The second challenge in classical forward-path  congestion control loops is the dependency of the reaction time of the sender node on the downstream path from the bottleneck to the receiver.
Considering a network where multiple flows with different RTTs share one bottleneck, current congestion control approaches lead to unfair bottleneck capacity distribution among these flows~\cite{rfc2582,brown2000resource,ha2008cubic,lakshman1997performance}.
Hence, although the flows may experience congestion simultaneously, the time of arrival of the congestion notification at the respective sources varies and hence they react with different delays to the congestion leading to throughput unfairness with sources having a smaller RTT receiving more bandwidth.

From the two challenges above, we observe that end-to-end congestion control without network support is difficult to configure.
TCP friendliness~\cite{floyd1991connections,barakat2000tcp,altman2000fairness} and max-min fairness~\cite{srikant2014communication} are much harder to obtain if the congestion control loops have heterogeneous delays and do not only depend on the shared bottleneck but also on their own paths.
To this end, we present a design of a data-plane based support for congestion marking that aims to dispense (in parts) with the two challenges above in the following sections.

\begin{figure}[t]
\centering
\includegraphics[width=0.8\columnwidth]{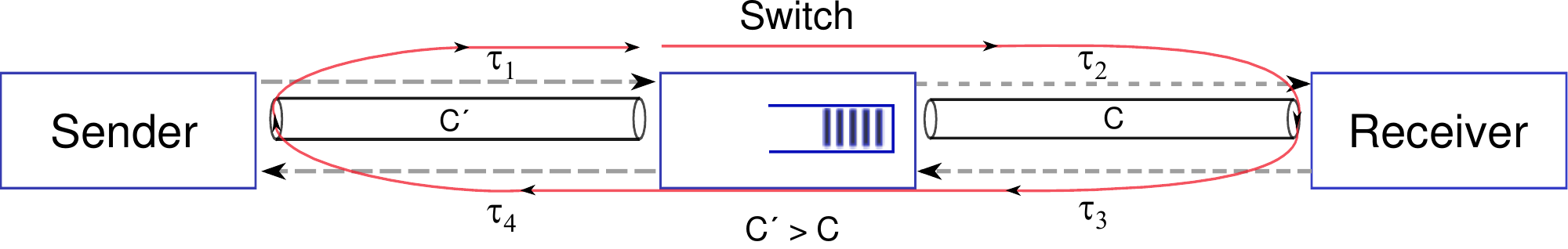}
\caption{Forward path ECN feedback loop.}
\label{fwd_ecn}
\end{figure}

\section{Design of Reverse Path Congestion Marking}
In this section, we first present the design objectives behind Reverse Path Congestion Marking (RPM), then present its design components before discussing these.

\subsection{RPM Design objectives}

To overcome the challenges discussed in the Sect.~\ref{sec:challenges}, we first derive the objectives for RPM that act as a framework for the subsequent implementation.
\subsubsection{Decouple congestion signals from the downstream path} In the current state-of-the-art congestion control loop, for the congestion notification (either duplicate ACKs or ECE marked ACKs) to reach the sender, the signal must traverse the path beyond the bottleneck to the receiver and then take the path back to the sender. If the downstream network after the bottleneck is dynamic, the time taken by the signal to reach the receiver generally increases (on average) and its variability increases too.

Any additional time elapsed after the creation of a congestion signal leads to additional buffer filling at the bottleneck and causes an increase in flow packet drops as well as flow completion time.
Therefore, it is necessary to decouple congestion signals from the downstream path and its bottlenecks, further disassociating the dependence of the sender's reaction to the congestion signal on the downstream path. %%and reduces the possibility of other congested nodes on the route congestion notification takes to reach the sender.

\subsubsection{Scalable data plane implementation}  Common models for the analysis of AQM mechanisms, such as CoDel, consider a segregation of flows that traverse the same bottleneck to correctly inform the sender of congestion.
As we consider scenarios where possibly hundreds or thousands of flows traverse a  bottleneck, and the storage of a per-flow AQM state is impractical on modern data plane implementations, it is necessary to obtain an RPM algorithm implementation of the data-plane that is flow-agnostic.

\subsubsection{No modifications to the end-host stack}  Transmission Control Protocol (TCP) and the Data Center TCP (DCTCP) are two of the most widely used transport layer protocols.
Modifications to the end-host TCP or DCTCP congestion control stack are weary to standardize.
To allow seamless adoption, the RPM algorithm implemented in the network data plane should adhere to TCP and DCTCP specifications without requiring changes to the end-hosts.

\subsection{RPM Design Components}

\subsubsection{Reverse path ECN-based Marking}

The idea of reverse path congestion marking is to mimic the ECE flag marking functionality of the receiver in the data plane.
Replicating the receiver's marking scheme onto the switch improves the congestion indication flexibility, as it can notify congestion by directly marking the ECE field of the in-flight ACKs going back to the sender.
The reverse marking feature shortens the reaction time of the congestion control loop by decoupling the congestion signal from the downstream bottleneck.
Figure~\ref{concept} illustrates the approach to Reverse Path Congestion Marking on the data plane.
Congestion detection and marking are carried out by an AQM such as CoDel, e.g., using the data plane implementation~\cite{kundel18}.
The reverse path marking on the data plane requires a congestion state and a reverse marking module, which is explained below in more detail.

In any ECN-enabled TCP congestion control loop, when the receiver accepts a packet with marked CE code points, it sets ECE flags in all subsequent ACKs to the sender to report congestion~\cite{rfc_ecn}.
The receiver continues to mark ACKs until it receives a packet with a congestion window reduced (CWR)
% expand CWR
flag set, indicating a reduction in the congestion window at the sender. Even though the sender receives multiple ACKs with ECE set, it halves its congestion window only once per RTT.
DCTCP clients apply similar functionality with an additional single-bit state variable denoted DCTCP Congestion Encountered (DCTCP.CE) to enable delayed ACKs~\cite{rfc_dctcp}.
When a DCTCP receiver receives a packet with the CE code points set, DCTCP.CE is set to "1" and transmits an ACK with the ECE flag set to the sender.
Here too, the receiver continues to mark ACKs until it receives a packet with a CWR
flag set thereby setting DCTCP.CE to ``0''. The DCTCP sender also reduces the congestion window once per RTT but the reduction depends on the number of ACKs (with ECE flag) the sender receives.
%The only difference in the ECE marking schemes of TCP and DCTCP is the presence of state variables in the latter, to support delayed ACKs.

When the AQM on the switch data plane (see Fig.~\ref{concept}) detects congestion, the marking mechanism of RPM sets the ECE bit in the in-flight ACKs of that flow, to notify the sender.
To identify this flow and the corresponding ACKs, a reverse marking module is required to keep a minimum state mapping flows to a binary congestion observed value.
This can be realized on modern P4 programmable switches (such as the Intel Tofino ASIC) and will be discussed in the next section.
For every congestion detected by the AQM for a specific flow at the bottleneck, one in-flight ACK of that flow on the reverse path will be marked (by setting the ECE flag).
The design of the marking operation at the switch is the same for TCP and DCTCP, although the DCTCP receiver employs specific state variables.
Note that the data plane need not keep a state of DCTCP Congestion Encountered, which renders RPM to be TCP and DCTCP compatible simultaneously.

\subsubsection{Per-Flow State Aggregation}

     When the flows pass through the switch and experience congestion, the switch should notify congestion to the senders of each flow.
     Hence, the switch has to remember the flow experiencing congestion and mark the ACK corresponding to the stored flow. The Congestion State Register in the data plane holds a memory unit for each flow traversing through the switch. When an AQM detects congestion for a flow, the congestion state register is incremented at a specific location pointed by the flow.
     Subsequently, when the corresponding in-flight ACK of the previously congested flow enters the switch, the congestion state register is decremented and ECE flag is marked.
     The per-flow congestion state facilitates RPM to uniquely notify the sender of a flow experiencing congestion and helps RPM achieve simultaneous compatibility with both TCP and DCTCP.

\subsection{Discussion}

RPM isolates the dependency of congestion notification on the downstream path after the bottleneck, as the marking algorithm on the data plane marks the in-flight ACKs going towards the sender instead of packets in the forward direction, which may experience different and variable delays towards the receiver and back.
Hence, the time required for the congestion feedback to reach the sender from the bottleneck is minimized.

One prerequisite of RPM is that the route taken by the data flow to reach the receiver and the route taken by the ACK to reach the sender should be the same, or at least the bottleneck switch should be on both paths forward and reverse.
Checking this condition is straightforward with an RPM data plane implementation using P4 such that RPM can be turned off for flows that do not adhere to this condition.

Also note that the size of the register present in the P4 programmable switch, e.g., Intel Tofino, dictates the number of flows for which RPM is operational. We report on the utilized on-switch memory in Sect.~\ref{sec:implementation_eval}.

The scarce memory availability on the data plane adds an upper threshold to the number of flows passing through the single switch that RPM can handle at any given time, as each flow requires one bit in the congestion state register.
Instead of storing information for every flow that passes through the bottleneck, we can choose to store per port congestion information.
All flows routing through a single bottleneck switch port will be considered one port group of flows for which RPM holds one state.
We note that this marking scheme might be too aggressive towards congestion window reduction.
%RPM would mark all ACKs present in the port group even when a single flow of that port group experiences congestion.
%Hence, the data plane has to store the state information of an AQM per port, i.e., the switch holds a memory unit for each port to remember if a specific port of the switch experiences congestion or not. But this marking scheme might be a bit aggressive towards congestion window reduction of all the flows belonging to a single port-group.

RPM works with TLS but is difficult to implement with IPSec. In TLS, as the TCP payload is encapsulated and RPM can mark the ECE field of the TCP header in case of congestion; however, with IPSec the TCP header is encapsulated, which means RPM cannot mark ACKs if RPM is not running at the IPSec tunnel ingress.

%and IPsec but fails when tunnel ingress of IPsec is in Compatibility mode, during which the two-bit ECN field in the IP header is always set to Not-ECT codepoint (i.e., ECN is not valid) in the outer header after tunneling.

%% TLS

%% PORT AGGREGATION

\begin{figure}[t]
\centering
\includegraphics[width=1\columnwidth]{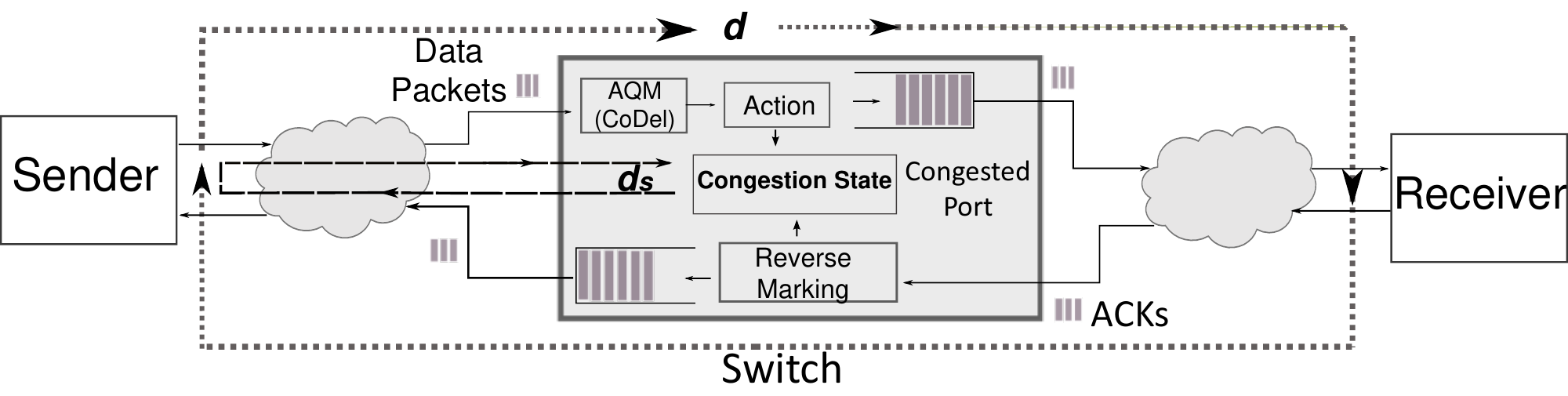}
\caption{Components of Reverse Path Congestion Marking.}
\label{concept}
\end{figure}

\section{Stability of Reverse Path Congestion Marking }

In this section, we provide an analytical model of  Reverse Path Congestion Marking based on the sketch of the setup depicted in Fig.~\ref{model}.
We assume that the TCP source on the sender (left-hand side) is in the congestion avoidance state of an Additive increase and Multiplicative decrease (AIMD) congestion control algorithm.
We assume that the bottleneck switch signals congestion by marking CE code points in the IP header and that the bottleneck link has a capacity of $c$ bps.
The following model extends the forward path congestion control analysis in \cite{kumar2004communication}.
This extension is not trivial as the resulting sending window ODE has multiple, different delay factors making the analysis much harder than, e.g., in~\cite{kumar2004communication}.

\begin{figure}[t]
\centering
\includegraphics[width=0.8\columnwidth]{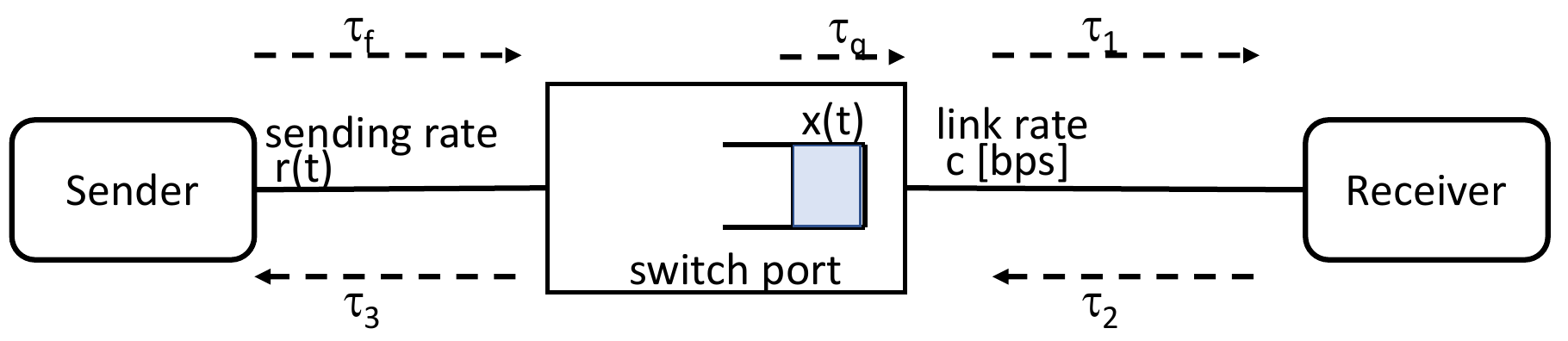}
\caption{Deterministic dynamic system model of the congestion control loop of a single connection over a bottleneck link.}
\label{model}
\end{figure}
We consider the scenario where an ACK packet traversing back to the sender without congestion notification (marked ECE bits in the TCP header) causes the time-dependent congestion window function $w(t)$ to increase by $\frac{a}{w(t)}$, where $a$ relates to the additive increase factor.
Now, we consider an AQM that randomly marks packets with congestion bits given that the queue at the switch output port experiences congestion, i.e., the queue length is larger than an AQM target value of $x^*$.
Precisely, the CE code points are marked $1$ with probability $\eta [x(t) - x^{*}]^{+}$ and are marked $0$ with $(1-\eta) [x(t) - x^{*}]^{+}$\footnote{We use $[x]^{+}$ to denote $\max\{x,0\}.$}.
Note that the congestion marking probability depends on the queue length exceeding the target $ [x(t) - x^{*}]^{+}$ and the normalization factor $\eta$ controls the ``aggressiveness'' (or slope) of the marking with the grade congestion.
As depicted in Fig.~\ref{model} we call the forward propagation delay to the switch $\tau_{f}$ and propagation delay for the rest of the loop $\tau_{r} = \sum_{i=1}^{3}\tau_i$. Note that the queueing delay at the switch port is denoted $\tau_q$ such that we define $\tau_f + \tau_r := \tau$ and the round trip delay $d:= \tau+\tau_q$.

When RPM indicates congestion, the switch sets the ECE bit in the ACK of the corresponding flow, which experiences a bottleneck.
Hence, the congestion signal (ECE bit) experiences a propagation delay back of approximately $ \tau_{rs} = \tau_3$ and the short control loop round trip time will be $ \tau_s = \tau_f + \tau_{rs}$. We note that at the time of congestion experience there might not be an ACK travelling back, however, (i) $\tau_{rs} < \tau_{r}$ always holds, and (ii) we approximate the time until an ACK is seen on the reverse path to zero. Once the sender receives an ACK with an ECE bit set, the sender's congestion window decreases multiplicatively by a factor of $b$.
Hence, we obtain the following rate of change of the congestion window $w(t)$ of the TCP sender
% \begin{multline} \label{window_func}
%     \dot w(t) = \frac{a}{w(t)} r(t-\tau -\tau_{q}) \cdot (1-\eta \times (x(t-\tau_{r}-\tau_{q})-x^{*})
%             \\\qquad - b\cdot w(t) \cdot r(t-\tau_{s} -\tau_{q}) \cdot \eta \cdot (x(t-\tau_{rs}-\tau_{q})-x^{*})
% \end{multline}
\begin{align}
%\Large
\label{window_func}
    \dot w(t) =& \frac{a}{w(t)} r(t-\tau -\tau_{q})  (1-\eta  \left[x(t-\tau_{r}-\tau_{q})-x^{*}\right]^+ \nonumber\\
            & - b\cdot w(t)  r(t-\tau_{s})  \eta  \left[x(t-\tau_{rs})-x^{*}\right]^+
\end{align}
The first term on the right side is an additive increase rate each time the sender receives an unmarked ACK.
The rate of the arrival of unmarked ACKs at the source is $ r(t-\tau -\tau_{q})$, and a factor of $(1-\eta  (x(t-\tau_{r}-\tau_{q})-x^{*})$ ACKs are not marked by switch.
The second term represents the multiplicative decrease proportional to $b>0$ and the current window rate $w(t)$.
The arrival rate of congestion notifications is $  r(t-\tau_{s})$ as the propagation delay of ECE bit after congestion indication at the switch is $\tau_{rs}$ and a factor of $ \eta  (x(t-\tau_{rs})-x^{*}) $ ACKs are marked by switch.
Next, we use that $d = \tau + \tau_q > 0$ is the round trip delay for the connection and denote for nomenclature homogeneity $d_s = \tau_s$ as  the ``short'' round trip delay between the sender and the bottleneck.

\begin{lemma}[Stability of RPM]
\label{lem:stability_RPM}
Given a fluid AIMD TCP source with additive increase parameter $a$ (e.g. $1$) and multiplicative decrease parameter $b$ (e.g. $\frac{1}{2}$) using a network path utilizing Reverse Path Congestion Marking (RPM).
Assume the path has a forward path round-trip time $d$, a (reverse) short round trip time between the sender and the bottleneck $d_s$, and a bottleneck link capacity $c$ and define $ \gamma := \frac{abc}{a+bc^{2}d^{2}}$. The dynamical system describing the congestion window of the AIMD source in~\eqref{window_func} is stable with a positive normalization factor $\eta$ given by
\begin{equation}
\normalsize
%\large
\label{eta}
    \eta = \frac{(e^{-sd}-e^{-sd_{s}} - 2)s \gamma - s^{2}}{\frac{e^{-sd}}{d^{2}} + \frac{c^{2} e^{-sd_{s}}}{2}} \,,
\end{equation}
for any $s<s^*<0$ with $s^*$ given in \eqref{s_star}.
\end{lemma}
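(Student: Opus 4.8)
The plan is to follow the classical fluid-model recipe: locate the equilibrium of \eqref{window_func}, linearize around it, pass to the Laplace domain to obtain a transcendental characteristic equation, and then choose $\eta$ so that the dominant root lies in the open left half-plane. The novelty — and the source of difficulty flagged after \eqref{window_func} — is that the window ODE carries two genuinely different loop delays, $d$ on the increase branch and $d_s$ on the RPM decrease branch, so the characteristic function mixes $e^{-sd}$ and $e^{-sd_s}$ and cannot be collapsed to the single-delay analysis of \cite{kumar2004communication}.

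First I would compute the equilibrium. At a fixed point $\dot w=0$ the bottleneck is saturated, so the equilibrium ACK/sending rate equals the capacity, $\bar r = c$, and the equilibrium window is $\bar w = c\,d$. The equilibrium queue sits strictly above target, $\bar x > x^{*}$, so the operator $[\,\cdot\,]^{+}$ is locally smooth there, and the equilibrium marking probability is $p := \eta(\bar x - x^{*})$. Balancing the additive and multiplicative terms of \eqref{window_func} gives $\tfrac{a}{\bar w}(1-p) = b\,\bar w\,p$, whence $p = a/(a+bc^{2}d^{2})$ and, crucially, the two identities $\gamma = b c\,p$ and $\gamma = a(1-p)/(c d^{2})$. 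These are exactly the relations that later let the coefficients collapse into the compact form of \eqref{eta}.

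Next I would linearize. Writing $w=\bar w+u$, $x=\bar x+\xi$, $r=c+\rho$ and dropping the now-inactive projection, I would use the fluid closure relations $\rho = u/d$ for the rate and $\dot\xi(t)=u(t-\tau_{f})/d$ for the queue (arrivals minus service), which in the Laplace domain reads $\Xi(s)=e^{-s\tau_{f}}U(s)/(s d)$. Collecting the delayed perturbations and using $\tau_{f}+\tau_{r}+\tau_{q}=d$ together with $\tau_{f}+\tau_{rs}=d_{s}$, every delay in \eqref{window_func} collapses onto either $e^{-sd}$ or $e^{-sd_{s}}$. The factor $1/s$ inherited from the queue integrator is what, after clearing denominators, produces the $-s^{2}$ term in the numerator of \eqref{eta}. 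The resulting characteristic equation is \emph{affine} in $\eta$, so it can be solved in closed form; substituting the equilibrium identities above yields precisely the expression \eqref{eta} for $\eta(s)$.

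The final and hardest step is the stability conclusion. Because the characteristic equation is affine in $\eta$, prescribing a target real root $s$ fixes $\eta$ uniquely, and stability reduces to showing that this $\eta$ is positive (a physically admissible marking slope) on the admissible range of $s$ delimited by the threshold $s^{*}$ of \eqref{s_star}, obtained from a sign analysis of the numerator and denominator of \eqref{eta} (the denominator $e^{-sd}/d^{2}+c^{2}e^{-sd_{s}}/2$ is positive for all real $s$, so the threshold is governed by the numerator). The main obstacle is that placing a single real root is not yet a full stability proof: with two incommensurate delays the spectrum is infinite, and I must argue that no complex root crosses into the right half-plane for the chosen $\eta$. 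I would handle this by a Nyquist / argument-principle sweep of the two-delay characteristic function, using that the construction forces $\eta\to 0$ as $s\to 0$ and that the dominant root is the real one identified above; verifying the absence of right-half-plane crossings as $\eta$ grows from $0$ to its prescribed value is where the two-delay structure makes the bookkeeping delicate and constitutes the crux of the argument.
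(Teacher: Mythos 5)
Your proposal follows essentially the same route as the paper's own proof: linearize the fluid model around the saturated equilibrium (the paper does this via the Little's-law substitutions $w=rd$, $w=rd_s$ rather than your explicit marking-probability identities, but the result is the same two-delay characteristic equation \eqref{char_eq}, affine in $\eta$), solve for $\eta$ in closed form to get \eqref{eta}, and obtain $s^*$ in \eqref{s_star} from a sign analysis restricted to real $s$, using that the denominator $e^{-sd}/d^2+c^2e^{-sd_s}/2$ is positive. The one step you flag as the unfinished crux --- excluding complex right-half-plane roots of the infinite spectrum via a Nyquist/argument-principle sweep --- is in fact also absent from the paper, which simply assumes $s$ real, imposes $\eta>0$, and declares stability; so your proposal reproduces the paper's argument in full and, if anything, makes explicit a gap the paper leaves unaddressed.
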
%
\noindent The proof of Lem.~\ref{lem:stability_RPM} is given in the appendix.

Based on Lem.~\ref{lem:stability_RPM} we can obtain the marking aggressiveness~$\eta$ of the RPM. As an exercise substituting common values for the system parameters one may find that $\eta$ is typically small and positive.

By varying $d_s$ (the position of bottleneck) with a constant $d$ and $c$ in (\ref{eta}), we find that $\eta$ is approximately constant. This indicates that the marking factor $\eta$ need not be changed based on bottleneck location in the network to maintain stability.

\section{Implementation \& Evaluation}
\label{sec:implementation_eval}
In this section, we discuss the implementation details of RPM. We further indulge in discussing the testbed and evaluating RPM performance.
\subsection{Algorithmic implementation of RPM}

As indicated in Fig.~\ref{concept}, the implementation of RPM has three components, (1) congestion indication, (2) reverse path marking, and (3) congestion state clearance.
For congestion indication, the switch can use an arbitrary AQM; in this paper, we employ CoDel in the data plane~\cite{kundel18}. We modify this data-plane implementation of CoDel to support RPM.

When CoDel detects a bottleneck, RPM increments a Congestion State Register $\mathbf{R_{C}}$ at the location pointed by the flow's 5-tuple Hash function $h($IP$_{src}$, IP$_{dst}$, IP$_{prot}$, p$_{src}$, p$_{dst})$. The hash function $h$ uniquely identifies each flow and maps the congestion state register to a unique per-flow memory unit.
To mark in-flight ACKs, RPM reads the congestion state register at the location pointed by the flow's 5-tuple Hash function $h$, but the source-destination IP addresses IP$_{(\cdot)}$ and L4-ports p$_{(\cdot)}$ are swapped.
If the register $\mathbf{R_{C}}$ is greater than or equal to "1" indicating congestion, the ECE flag of the TCP header in the ACK packet is set and the register is decremented by "1".

%%RPM should stop marking the ACKs after the reduction in sending rate.
%%To implement this, RPM sets the congestion register at the egress port of the packet if it has the CWR flag set.

The packet entering the data plane goes through all the RPM components mentioned above to seamlessly support TCP or DCTCP for the following reasons; first, the data plane need not differentiate between the data packets traveling from sender to receiver and ACKs traveling from receiver to sender.
In addition, TCP and DCTCP are bidirectional flows meaning both sender and receiver can send data to each other while simultaneously acknowledging the received data. So, a packet considered an ACK might carry data.
Hence, flows in the reverse direction, i.e., of the ACKs, can be marked.
Note that a TCP flow requires at least one ACK with ECE flag set to halve its congestion window per RTT, while a DCTCP flow exactly needs the count of ACKs the sender obtains to calculate congestion window per RTT~\cite{rfc_dctcp, alizadeh2010data}.
The marking scheme employed by RPM satisfies the congestion window calculation of both TCP and DCTCP.
This is because every time a flow experiences congestion, in-flight ACKs in the reverse direction of the flow will be marked.
Algorithm~\ref{algo_RPM} describes the core logic behind RPM when implemented in the data plane.

%\begin{equation} \label{jains}
%    J = \frac{{\sum_{i=1}^{n}x_{i}}^2}{n \times \sum_{i=1}^{n} {x_{i}}^2}
%\end{equation}

\begin{algorithm}[H]
{
\caption{Pseudocode of the RPM Algorithm}
\label{algo_RPM}
\begin{algorithmic}[1]
\STATE Run CoDel

\tcp{CoDel congestion state denoted $\text{drop}_C$}
\IF {$\text{drop}_C == \text{True}$}
    \STATE $pos \gets h($IP$_{src}$, IP$_{dst}$, IP$_{prot}$, p$_{src}$, p$_{dst})$
    \STATE $\mathbf{R_{C}}[pos] \gets \mathbf{R_{C}}[pos] + 1$
\ENDIF

\tcp{When an ACK arrives}
\STATE $pos \gets h($IP$_{dst}$, IP$_{src}$, IP$_{prot}$, p$_{dst}$, p$_{src})$
\IF{$\mathbf{R_{C}}[pos] \geq 1$}
    \STATE $\text{TCP.ECE} \gets 1$
    \STATE $\mathbf{R_{C}}[pos] \gets \mathbf{R_{C}}[pos] - 1$
\ENDIF
\end{algorithmic}
}
\end{algorithm}

\begin{figure}[h]
\centering
\includegraphics[width=0.7\columnwidth]{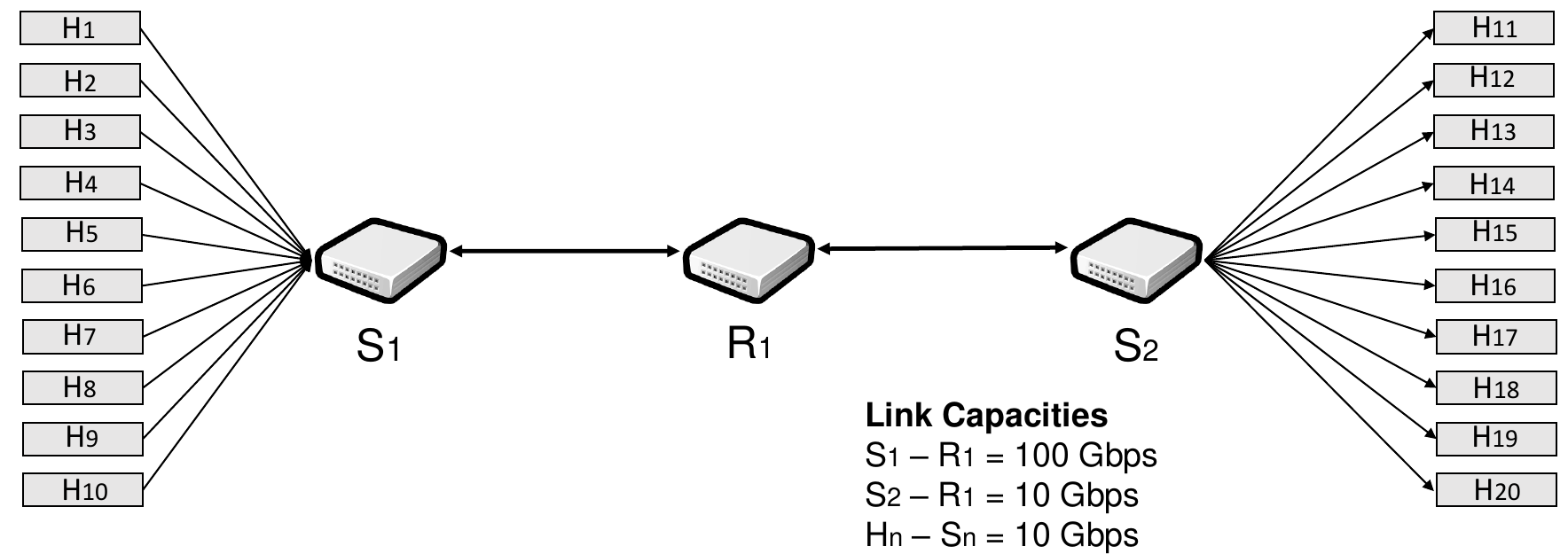}
\caption{Testbed Topology.}
\label{fig:ISP_testbed}
\end{figure}

\subsection{Evaluation Results}

Next, we compare RPM with CoDel-Forward Congestion Marking (CoDel-FWD).
We use the topology in Fig.~\ref{fig:ISP_testbed} to study the performance of RPM.
First, we study the throughput fairness when the flows in the network have different RTTs but share the same bottleneck. 
The hosts $H_1$ to $H_{10}$ act as senders and hosts $H_{11}$ to $H_{20}$ act as receivers, such that $H_i$ sends its traffic only to $H_{i+10}$. Host links are set to $10\ Gbps$.
The hosts may be run using high performance traffic generation as in~\cite{KundelSHRK22}.
In CoDel-FWD, after congestion identification by the CoDel algorithm, the marking scheme directly sets the CE code points in IP header, indicating congestion to the receiver.
The CoDel parameters, i.e., the target delay and the CoDel interval, are set to $1\ ms$ and $20\ ms$. As the Tofino switch has finite memory, the buffer has an upper limit on the number of packets it can store, which restricts the delay in this configuration to be below $2\ ms$.

\begin{figure}[t]
\centering
\includegraphics[width=0.7\columnwidth]{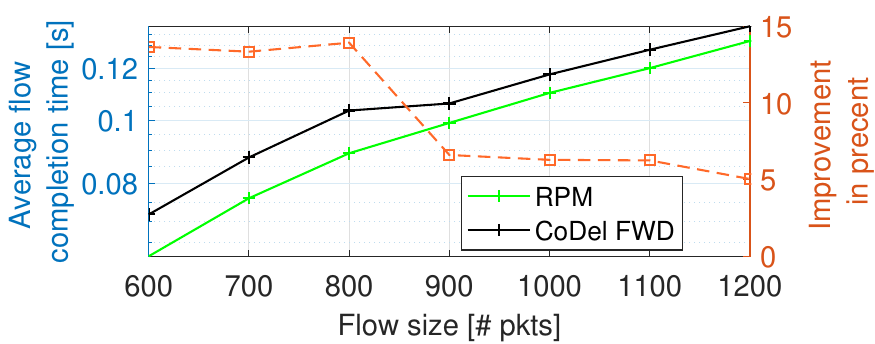}
\caption{Average flow completion time for short flows. The shorter the flows the more RPM helps decreasing the flow completion time due to shorter reaction time compared to CoDel on the forward path. }
\label{fig:FCT}
\end{figure}

\paragraph*{Fairness Experiment} For the fairness experiments we set the delay of each sender-receiver pair link using the Traffic Control (tc) linux command on both senders and receivers.
The link delays of the links $H_i-S_1$ are all set to $10\ ms$ and the link delays of the links  $S_2-H_j$ for each experiment conducted are given in Tab.~\ref{tab:experiments}.
The P4 switch $R_1$ (in which AQM is deployed) is connected to $S_1$ with a $100\ Gbps$ link and to $S_2$ with a $10\ Gbps$ link leading to congestion at the output port of $R_1$. We deploy AQM on $R_1$. We measure Jain's fairness index\footnote{Jain's fairness index defined as $J = \frac{E[X]^2}{E[X^2]}$ \cite{jain1999throughput}.} of the flow throughput for long TCP Cubic flows.

\begin{table}[h]
\centering
{%\scriptsize

     \caption{Experiment configurations of receiver link delays.}
     \label{tab:experiments}
 \scalebox{1}{
 \begin{tabular}{|c||*{5}{c|}}\hline
 \makebox[4em]{\textbf{Experiment}}&\makebox[4em]{\textbf{$S_2$-$H_{11/12}$}}&\makebox[4em]{\textbf{$S_2$-$H_{13/14}$}}&\makebox[4em]{\textbf{$S_2$-$H_{15/16}$}}&\makebox[4em]{\textbf{$S_2$-$H_{17/18}$}}&\makebox[4em]{\textbf{$S_2$-$H_{19/20}$}}\\\hline\hline
  $1$& $10\ ms$ & $10\ ms$ & $10\ ms$ & $10\ ms$ & $10\ ms$ \\\hline
 $2$ & $10\ ms$ & $20\ ms$ & $30\ ms$ & $40\ ms$ & $50\ ms$\\\hline
  $3$& $20\ ms$ & $40\ ms$ & $60\ ms$ & $80\ ms$ & $100\ ms$\\\hline
 \end{tabular}}
 }
\end{table}

\begin{table}[h]
\centering
\caption{Jain's Fairness index comparison. $J\rightarrow 1 $ indicates complete fairness. RPM increases throughput fairness for heterogeneous flow RTTs (Experiments 2/3).}
     \label{jain-fair}
 \scalebox{1}{
 \begin{tabular}{|c||*{4}{c|}}\hline
 \makebox[6em]{\textbf{Experiment}}&\makebox[6em]{\textbf{J(CoDel-FWD)}}&\makebox[6em]{\textbf{J(RPM)}}\\\hline\hline
  $1$& $0.89$& $0.89$\\\hline
 $2$&$0.74$&$0.81$\\\hline
  $3$&$0.82$&$0.86$\\\hline
 \end{tabular}}

\end{table}

Table~\ref{jain-fair} shows the results of our fairness experiments. Experiment 1, that is used for calibration, sets homogeneous link delays of 10ms on all links. We observe that both RPM and CoDel-FWD obtain the same fairness index of the flow throughput. Experiments 2 and 3 show that when the RTTs are heterogeneous, RPM improves the throughput fairness as all the flows have the same RPM reaction time of $20$ms. Note that RPM cannot improve the RTT unfairness due to differing RTTs on the first link, i.e. $H_i - S_1$.
Also, RTT unfairness in the additive part of AIMD remains unchanged.

% \begin{figure}[h]
% \includegraphics[width=1\columnwidth]{assets/}
% \caption{Data Center Topology }
% \label{fig:DCN_testbed}
% \end{figure}

%Data Center testbed setup consists of one Spine Switch and two Top of the Rack(ToR) switches. Each Tor is connected to 10 hosts. The hosts (H1 to H10) are sender and the rest are receivers. The link capacity between Spine-ToR and Hosts-ToR is 10Gbps.
\paragraph*{Flow Completion Time Experiment}
Here, we evaluate the average flow completion time, especially of short flows, as we expect RPM to perform best with short flows as it shortens the reaction time.
We deploy RPM/CoDel\_FWD on switch $S_1$. We set the link capacity of all links to $10\ Gbps$ and remove the link delays introduced in the fairness experiment.

Figure~\ref{fig:FCT} shows the average flow completion time for short flows given in the number of MSS packets. As RPM helps to decrease the reaction time of the congestion control loop, it is especially beneficial for short flows if they suffer from congestion losses.

\section{Related Work}

Established AQM schemes, such as Random Early Detection (RED) \cite{floyd1993red}, CoDel\cite{nichols2012CoDel}, and PIE\cite{pan2013pie}, send forward path congestion notifications such that the sender adjusts its sending rate. The congestion signal is either packet drop or setting ECN flelds in IP header.
when CE code points of IP header are set by the router in case of congestion \cite{RFC3168} the receiver reflects this information in the ECN-Echo bit of the TCP header of the ACK to notify the sender. Once the sender receives this ACK with ECN-Echo bit set, it reduces its sending rate. The minimum time required to observe a reduced sending rate by a congested port, i.e, the reaction time, is at least one RTT.

To decrease this reaction time, Backward ECN (BECN) for network congestion was originally proposed for ATM networks~\cite{salim98}, wherein the router informs the sender about the congestion by sending out-of-band ICMP Source Quench (ISQ) messages without the intervention of the receiver. This reduces the reaction time but generates additional traffic in the feedback loop and loss of the feedback ISQ packets would not be noted by the sender.

EXplicit Congestion Protocol (XCP) by Zhang and Henderson \cite{zhang2005xcp} proposed a constant exchange of information between router and end hosts. The routers notify the receiver about the state of congestion in the network by marking it in the XCP header. The sender receives congestion notifications from the receiver through ACKs. This process takes around one RTT and involves modifying the end-hosts.

Similar to BECN, the work in~\cite{feldmann2019} proposed a network-assisted congestion feedback using NACK packets. Instead of sending ISQ packets, this approach sends NACKs generated by programmable switches to notify the sender regarding congestion. Here too, the sender stack has to be modified to understand the NACKs and the additional out-of-band backward traffic remains.

One approach to use in-flight ACKs on their way to the sender is found in different forms in~\cite{raqm} and \cite{meng2022achieving}. In both works the ACKs are delayed in the router. This expires the TCP Retransmission Timeout at the sender, forcing the sender to reduce its sending rate. Both works~\cite{raqm} and \cite{meng2022achieving} estimate the latency between the access point and the receiver by taking the queue delay as well as transmission delay into account. The work in~\cite{meng2022achieving} is only confined to wireless communication with last-mile access points, whereas~\cite{raqm} mainly tackles incast problems in multi-tenant data centers. In contrast, RPM is a congestion marking scheme that can be combined with any ECN-enabled congestion control at the receiver and any AQM algorithm deployed on the switch to mark in-flight ACKs to improve the responsiveness of the congestion control loop.

\section{Conclusion}

In this paper, we presented Reverse Path Congestion Marking (RPM), i.e., an in-network method to improve the congestion control reaction time without changing the end-host stack.
With RPM, we can send congestion signals directly back to the sender, decoupling congestion signaling from the downstream path past the bottleneck.
We mathematically prove that RPM is stable.
Further, using a testbed built around P4 programmable ASIC switches, we showed that it improves throughput fairness for RTT-heterogeneous TCP flows as well as the flow completion time, especially for small DCTCP flows.

% \section*{Acknowledgements}
% This work is part of the 5G-IANA project funded by the European Union Horizon 2020 research and innovation program under grant agreement No. 101016427.

\section{Appendix}
\begin{proof}[Proof of Lemma~\ref{lem:stability_RPM}]
By Little's theorem we know that the average congestion window equals the average sending rate times the average delay.
With this in mind we approximate the sending window, hence, in the case of unmarked ACKs $w(t) = r(t)  d$ and in the case of marked ACKs we have $w(t) = r(t) d_{s}$.
we can rewrite \eqref{window_func} as
%\vspace{-10pt}
\begin{align}
\Large
\label{rate_func}
    \dot r(t) =& \left[\frac{a}{r(t)  d^{2}} r(t-d)\right](1-\eta \left[(x(t-d+\tau_{f})-x^{*}\right]^{+})\nonumber \\
            & -\left[ b \cdot r(t)r(t-d_{s})\right]\eta\left[x(t-d_{s}+\tau_{f})-x^{*}\right]^{+}
\end{align}
By linearizing \eqref{rate_func}, substituting the dynamics of $x(t)$, and using $r(t)$ again, we obtain the following delay differential equation
%$\ddot r(t) + \gamma \left[2 \dot r(t) - \dot r(t-d) + \dot r(t - d_{s}\right] + \alpha r(t-d) + \Omega r(t-d_{s}) = 0$
\begin{align}
\Large
\label{simpl_d_d_r(t)}
        \ddot r(t) &+ \gamma \left[2 \dot r(t) - \dot r(t-d) + \dot r(t - d_{s}\right] \nonumber\\
        &+ \alpha r(t-d) + \Omega r(t-d_{s}) = 0
\end{align}
where $ \gamma = \frac{abc}{a+bc^{2}d^{2}}$, $ \alpha = \frac{\eta a}{d^{2}}$ and $\Omega = bc^{2}\eta $.
%
%When $b=0$, $ \gamma = \Omega = 0 $ the equation \ref{simpl_d_d_r(t)} will be $ \ddot r(t) + \alpha r(t-d) = 0$ which represents the TCP model that does not reduce its window when the packet is lost.
To study the stability of the model, we substitute $r(t) = e^{st}$ in \eqref{simpl_d_d_r(t)} and solve for $s$. We obtain a characteristic equation as follows
\begin{align}
\Large
\label{char_eq}
     s^{2} + (2-e^{-sd} + e^{-sd_{s}})\gamma s + \alpha e^{-sd} + \Omega e^{-sd_{s}} = 0
\end{align}
For this dynamical system to be stable, the roots of \eqref{char_eq} should have a negative real term. Note that this equation will have infinite roots, and hence, we seek conditions, i.e., TCP AIMD and scenario parameters, under which the system is always stable.

Classically, we consider AIMD with $a = 1$ and $ b = \frac{1}{2}$, hence, $ \gamma, \alpha$ and $\Omega $ are, $\frac{c}{2 + c^{2}d^{2}}$, $\frac{\eta}{d^{2}} $ and $ \frac{\eta c^{2}}{2}$ respectively. Substituting in \eqref{char_eq} and reorganizing we obtain $\eta$ in \eqref{eta}.
% \begin{equation} \label{eta}
%     \eta = \frac{(e^{-sd}-e^{-sd_{s}} - 2)s \gamma - s^{2}}{\frac{e^{-sd}}{d^{2}} + \frac{c^{2} e^{-sd_{s}}}{2}}
% \end{equation}
Now, for the dynamical system to be stable, $\eta $ should be greater than 0 under the condition that $s$ has a negative real part.
For simplicity, assume that $s$ is real; hence for $\eta>0$ we obtain the condition $ (e^{-sd}-e^{sd_{s}} - 2)s \gamma - s^{2}$ $< \ 0$. After some algebraic manipulation, and using
%%For Eq. \ref{eta} to be positive, numerator of that equations should be greater than 0 for $s<0$ as denominator of Eq.\ref{eta} is always positive., i.e. $ (e^{-sd}-e^{sd_{s}} - 2)s \gamma - s^{2}$ should be less than 0.
% \begin{align}
%     (e^{-sd} - e^{-sd_{s}} -2)s \gamma > s^{2} \\
%     (e^{-sd} - e^{-sd_{s}} -2) < \frac{s}{\gamma}
% \end{align}
a power series expansion and solving for $s$ we obtain

{
\begin{align}
\Large
\label{s_star}
    s^{*} = \frac{ - \sqrt{5d^{2} \gamma^{2} - 2d\gamma^{2}d_{s} + 2d \gamma - 3\gamma^{2}d_{s}^{2} - 2\gamma d_{s} + 1} - d\gamma + \gamma d_{s} - 1} {\gamma (d^{2} - d_{s}^{2})}
\end{align}
}
Substituting any value of s less than $s^{*}$ in \eqref{eta} insures stability.
\end{proof}

\balance
\bibliographystyle{IEEEtran}
\bibliography{ExtFC.bib}

% \begin{center}
% \begin{tabular}{||m{12em} | m{3cm}||}
%  \hline
%  Label & Notation \\ [0.5ex]
%  \hline
%  Transmission rate of sender at time $t$ & $r(t)$ \\
%  \hline
%  Queue length in buffer at time $t$ & $x(t)$ \\
%  \hline
%  Target queue length & $x^{*}$ \\
%  \hline
%  Propagation delay from sender to bottle link buffer & $\tau_{f}$ \\
%  \hline
%  Queue Delay & $\tau_{q}$ \\
%  \hline
%  Propagation delay for return path from bottleneck  & $\tau_{r} = \tau_{1} + \tau_{2} + \tau_{3}$ \\
%  \hline
%  Bottleneck link capacity in bps & $c$ \\
%  \hline
%  Congestion window function & $w(t)$ \\
%  \hline
%  Total Propagation Delay & $\tau = \tau_{f} + \tau_{r}$ \\
%  \hline
%  Total Propagation delay including queue delay & $d = \tau + \tau_{q}$ \\
%  \hline
%  Probability Constant & $\eta$ \\
%  \hline
%  Loss Probability & $p$ \\
%  \hline
%  Number of connections sharing a bottleneck & $n$ \\
%  \hline
%  Packet length of each TCP transfer & $L$ \\
%  \hline
%  Propagation delay for return path from bottleneck (Reverse Path)  & $\tau_{rs} = \tau_{3}$ \\
%  \hline
%   Total Propagation Delay (Reverse Path) & $\tau_{s} = \tau_{f} + \tau_{rs}$ \\
%  \hline
%  Total Propagation delay including queue delay (RP) & $d_{s} = \tau_{s} + \tau_{q}$ \\
%  \hline
% \end{tabular}
% \end{center}

\end{document}